\newtheorem{theorem}{Theorem}[section]
\def\pprw{8.5in}
\def\pprh{11in}
\newcommand{\hide}[1]{}
\newcommand{\bit}{\begin{itemize*}}
\newcommand{\eit}{\end{itemize*}}
\newcommand{\ben}{\begin{enumerate*}}
\newcommand{\een}{\end{enumerate*}}
\newcommand{\lowrank}{{latent enviroments \xspace}}
\newcommand{\model}{{TribeFlow}\xspace}
\newcommand{\Hentity}{{Users}\xspace}
\newcommand{\SDentity}{{Items}\xspace}
\newcommand{\hentity}{{users}\xspace}
\newcommand{\sdentity}{{items}\xspace}
\DeclareMathAlphabet\mathbfcal{OMS}{cmsy}{b}{n}
\newcommand{\cM}{\mathcal{M}}
\newcommand{\cD}{\mathcal{D}}
\newcommand{\cU}{\mathcal{U}}
\newcommand\Mark[1]{\textsuperscript{#1}}
\begin{document}
\title{TribeFlow: Mining \& Predicting User Trajectories}
\numberofauthors{1}
\author{
Flavio Figueiredo\Mark{1,2},\, Bruno Ribeiro\Mark{4,5},\, Jussara Almeida\Mark{3},\, Christos Faloutsos\Mark{5}\vspace{3pt}\\
\affaddr{\Mark{1}UFCG - Brazil,}\quad
\affaddr{\Mark{2}IBM Research - Brazil,}\quad
\affaddr{\Mark{3}UFMG - Brazil,}\quad
\affaddr{\Mark{4}Purdue University,} \quad
\affaddr{\Mark{5}Carnegie Mellon University}\vspace{3pt}\\
\email{\{flaviov,jussara\}@dcc.ufmg.br,} \email{ribeiro@cs.purdue.edu}, \email{christos@cs.cmu.edu}
}

\maketitle

\begin{abstract}
Which song will Smith listen to next? 
Which restaurant will Alice go to tomorrow? 
Which product will John click next?
These applications have in common the prediction of user trajectories that are in a constant state of flux over a hidden network (e.g.\ website links, geographic location).
But what users are doing now may be unrelated to what they will be doing in an hour from now.
Mindful of these challenges we propose \model, a method designed to cope with the complex challenges of learning personalized predictive models of non-stationary, transient, and time-heterogeneous user trajectories. 
\model is a general method that can perform next product recommendation, next song recommendation, next location prediction, and general arbitrary-length user trajectory prediction without domain-specific knowledge.
\model is more accurate and up to ${\bf 413\times}$ {\bf faster} than top competitors.


\end{abstract}

\keywords{
User Trajectory Recommendation; 
Latent Environments;
}

\section{Introduction}
\label{sec:intro}

Web users are in a constant state of flux in their interactions with products, places, and services.
User preferences and the environment that they navigate determine the sequence of items that users visit (links they click, songs they listen, businesses they visit).
In this work we refer to the sequence of items visited by a user as the user's trajectory.
Both the environment and user preferences affect such trajectories.
The underlying navigation environment may change or vary over time: a website updates its design, a suburban user spends a weekend in the city.
Similarly, user preferences may also vary or change over time: a user has different music preferences at work and at home, a user prefers ethnic food on weekdays but will hit all pizza places while in Chicago for the weekend.

The above facts result in user trajectories that over multiple time scales can be non-stationary (depend on wall clock times), transient (some visits are never repeated), and time-heterogeneous (user behavior changes over time); please refer to Section~\ref{sec:exploratory} for examples.
Unfortunately, mining non-stationary, transient, and time-heterogeneous stochastic processes is a challenging task.
It would be easier if trajectories were stationary (behavior is independent of wall clock times), ergodic (visits are infinitely repeated), and time-homogeneous (behavior does not change over time).

\begin{figure}[t]
    \centering
    \includegraphics[scale=.9]{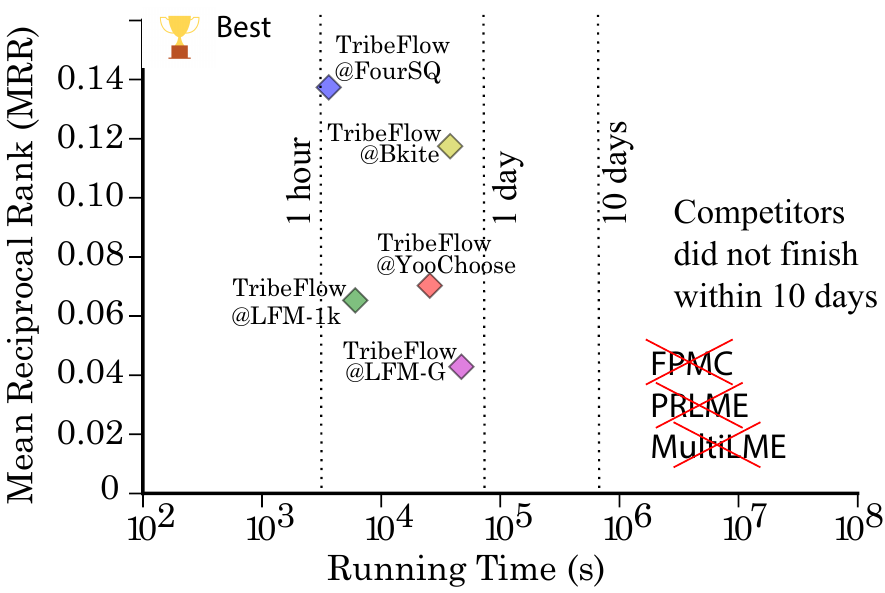}
    \vspace{-10pt}
    \caption{\model is at least an order of magnitude faster than state-of-the-art methods for next-item predictions.}
    \label{fig:cjewel}
    \vspace{-10pt}
\end{figure}

In this work we propose \model to tackle the problem of mining and predicting user trajectories.
\model takes as input a set of users and a sequence items they visit (user trajectories), 
including the timestamps of these visits if available, and outputs a model for personalized next-item  prediction (or next $n > 1$ items). 
\model can be readily applied to personalized trajectories from next check-in recommendations, to next song recommendations, to product recommendations.
\model is highly parallel and nearly two orders of magnitude faster than the top state-of-the-art competitors.
In order to be application-agnostic we ignore application-specific user and item features, including time-of-day effects, but these can be trivially incorporated into \model.

To illustrate the performance of \model consider Figure~\ref{fig:cjewel},
where we seek to compare the Mean Reciprocal Rank (MRR) of \model over datasets with up to 1.6 million items and 86 million item visits (further details about this dataset is given in Section~\ref{sec:results}) against that of state-of-the-art methods such as Multi-core Latent Markov Embedding (MultiLME)~\cite{Moore2013}, personalized ranking LME (PRLME)~\cite{Feng2015}, and Context-aware Ranking with Factorizing Personalized Markov Chains~\cite{Rendle2010a} (FPMC).
Unfortunately, MultiLME, PRLME, and FPMC cannot finish any of these tasks in less than 10 days while for \model it takes between one and thirteen hours.
In significantly sub-sampled versions of the same datasets we find that \model is at least 23\% more accurate than its competitors.

\model works by decomposing potentially non-stationary, transient,  time-heterogeneous user trajectories into {\em very} short sequences of random walks on latent environments that are stationary, ergodic, and time-homogeneous.
An intuitive way to understand \model is as follows. 
Random walks have been widely used for ranking items on {\em observed graph} topologies (e.g.\ PageRank-inspired approaches~\cite{Backstrom2011,haveliwala2002topic,Haveliwala2003,Jeh2003,page1999pagerank,richardson2001intelligent,Tong2006}); meanwhile,  
overlapping community detection algorithms~\cite{Airoldi2009,Krivitsky2009,Pool2014,Yang2013} also use {\em observed graphs} to infer latent weighted subgraphs.
But {\em what if we were not given the environments (weighted graphs \& time scales) but could see the output of a random surfer over them?}
\model sees user trajectories and infers a set of latent environments (weighted item-item graphs and their time scales) that best describe user trajectories through short random walks over these environments; after the short walk users perform a weighted jump between environments; the jump allows \model to infer user preference of latent environments.
Once the \model model infers the relationships between short trajectories and latent environments, 
we can give it any user history and current trajectory to infer a posterior over the latent environment that the user is currently surfing and, this way, perform accurate personalized next-item prediction using a random walk.
Our main contributions can be summarized as:

\newcommand{\gcheck}{\color{OliveGreen}{\bf \checkmark}}
\setlength{\tabcolsep}{2pt}

\begin{table*}[t!!]
\scriptsize
\centering
\caption{Comparison of Properties of State-of-art Methods.}
\begin{tabular}{l*{10}{c}}
\toprule
& MC MLE~\cite{Levin2008} & Gravity Model~\cite{Silva2006} & LDA/TM-LDA~\cite{Griffiths02,Wang2012} & LME/MultiLME~\cite{Chen2012,Moore2013} & P(R)LME~\cite{Feng2015,Wu2013} & FPMC~\cite{Rendle2010a} & Temporal Tensors & {\bf \model} \\
&&&&&&&&{\bf (our method)} \\
\cmidrule(r){2-9}
General Approach & \gcheck &   & \gcheck & \gcheck & \gcheck &  \gcheck & \gcheck&  \gcheck \\
Trajectory Model & \gcheck &   & & \gcheck & \gcheck & \gcheck & &  \gcheck \\
Personalized & &  & \gcheck & & \gcheck & \gcheck & \gcheck&  \gcheck \\ 
Multiple Time Scales  &  &  &  & &  &  & &  \gcheck \\ 
Trajectory Memory & \gcheck &  &  & & & & &  \gcheck \\ 
Sense Making &\gcheck & \gcheck  & \gcheck & & & & \gcheck&  \gcheck \\
Sub-Quadratic &\gcheck & \gcheck  & \gcheck & & \gcheck & \gcheck&  \gcheck & \gcheck \\
Scalable &\gcheck & \gcheck  & \gcheck & &  &  & \gcheck&  \gcheck \\ 
\bottomrule
\end{tabular}
\label{tab:salesmat}
\end{table*}

\begin{itemize}[itemsep=-0.1mm,leftmargin=*]
\item {\bf (Accuracy).} In our datasets \model predictions are always more accurate than state-of-the-art methods.
The state-of-the-art methods include Latent Markov Embedding (LME) of Chen et al.~\cite{Chen2012}, Multi-LME of Moore et al.~\cite{Moore2013}, PRLME of Feng et al.~\cite{Feng2015}, 
and FPMC of Rendle et al.~\cite{Rendle2010a}. 
\model is also more accurate than an application of the time-varying latent factorization method (TM-LDA) of Wang et al.~\cite{Wang2012}. 
We also see why \model can better capture the latent space of user trajectories than state-of-the-art tensor decomposition methods~\cite{Matsubara2012a}.
\item {\bf (Parameter-free).} In all our results \model is used without parameter tuning. Because \model is a nonparametric hierarchical Bayesian method, it has a few parameters that are set as small constants and do not seem significantly affect the performance if there is enough data. 
The only parameter that affects performance ($B \in \mathbb{Z}^+$ explained in the model description) is safely set to a small constant ($B=1$).
\item {\bf (Scalability).} \model uses a scalable parallel algorithm to infer model parameters from the data.
\model is between 48$\times$ to 413$\times$ faster than our top competitors, including LME, PLME, PRLME, and FPMC.
When available, we evaluated the performance of \model over the datasets originally used to develop the competing methods.
\item {\bf (Novelty).} \model provides a general framework ({\em random surfer over infinite latent environments}) to build upon for application-specific  recommendation systems.
%
%
\end{itemize}
{\bf Reproducibility.} Of separate interest is the reproducibility of our work. Datasets, the \model source code and extra source code of competing methods that were not publicly available (implemented by us) can be found on our website\footnote{\url{http://flaviovdf.github.io/tribeflow}}.
\begin{figure}[tb]
\centering
\vspace{-2pt}
\includegraphics[width=3in,height=1.5in]{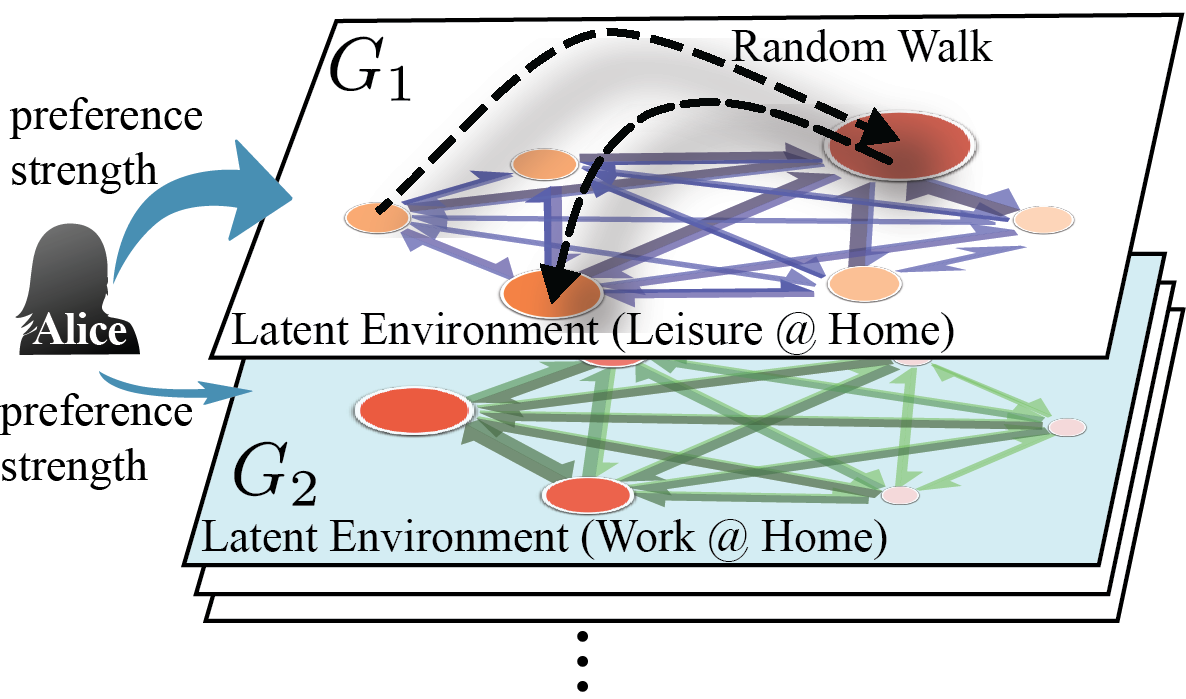}
\vspace{-5pt}
\caption{(\model) Alice randomly chooses a (latent) environment (weighted item-item graph $G_1$ and associated time scales) according to her preferences and surfs for a short time (two steps) before randomly jumping environments.\vspace{-7pt}}
\label{f:alice}
\end{figure}
%
%

We now present the outline of this work.
Section~\ref{sec:related} reviews the related work. 
Section~\ref{sec:model} describes the \model model.
Section~\ref{sec:results} presents our results on both small and reasonably-sized datasets of real-world user trajectories.
Section~\ref{sec:results}  also compares \model both in terms of accuracy and speed against state-of-the-art and naive methods.
Section~\ref{sec:exploratory} shows that \model has excellent sense-making capabilities.
Finally, Section~\ref{sec:conclusions} presents our conclusions.

\section{Related Work}
\label{sec:related}
\newcommand{\myparagraph}[1]{{{\bf #1:}}}

How do we find a compact representation of the state space of network trajectories that allows easy-to-infer and accurate models? In this section, we present an overview of previous efforts related to \model that were also motivated by this question.

\myparagraph{Random Walks over Observed Networks}
The naive solution to the above question would be to simplify the state space of trajectories by merging nodes into communities via community detection algorithms~\cite{Rosvall2008,Yang2013,Pool2014,Airoldi2009,Rodriguez2014}.
However, we do not have the underlying network, only the user trajectories.
\model is able to infer latent environments (weighted graphs and inter-event times) from user trajectories without knowledge of the underlying network.

\myparagraph{Latent Markov Embedding}
Latent Markov Embedding (LME) \cite{Chen2012,Chen2013,Feng2015,Turnbull:2014,Wu2013} was recently proposed to tractably learn trajectories through a Markov chain whose states are projected user trajectories into Euclidean space. 
However, even with parallel optimizations~\cite{Chen2013} the method does not yet scaled beyond hundreds of thousands of users and items (as shown in Section~\ref{subsec:markov}).
The LME method can be seen as one of the first practical approaches in the literature to jointly learn user memory \& item preferences in trajectories.
Wu et al.~\cite{Wu2013} present a factorization embedding adding user personalization to the LME predictions, called PLME, which is not parallel and suffers from a quadratic runtime. Very recently Feng et al.~\cite{Feng2015} relaxed some of the assumptions of PLME in order to focus only on rankings (thus, P{\em R}LME), making the quadratic learning algorithm become linear but not multi-core as Multi-LME. 

\myparagraph{Factorizing Personalized Markov Chains} 
Rendle et al.~\cite{Rendle2010a} proposes Factorizing Personalized Markov Chains (FPMC) to learn the stochastic process of user purchasing trajectories.
FPMC predicts which item a customer would add next in his or her basket by reducing the state space of trajectories of each user into an unordered set (i.e., trajectory sequence is neglected).
The resulting Markov model of each user transition in the space of items forms the slice of a tri-mode tensor which then is embedded into a lower dimensional space via a tensor decomposition.
Similarly, Aizenberg et al.~\cite{Aizenberg2012} performs a matrix factorization for embedding.
Wang et al.~\cite{Wang2014} and Feng et al.~\cite{Feng2015} also make personalized factorization-style projections.
FPMC seems to be the most widely used method of this class that predicts next-items from unordered sets.

\myparagraph{Collaborative Filtering Methods}
Collaborative filtering methods can be broadly classified into two general approaches: memory-based (e.g.~\cite{Shi2014}) and item-based (e.g.~\cite{Salakhutdinov2008,Ma2008}).
In memory-based models next item predictions are derived from the trajectory each user independently of other users.
In item-based models, next item predictions are based on the unordered trajectories of all users, which disregards the sequence of events in the trajectory. 
More general unordered set predictions use Collective Matrix Factorization~\cite{Singh2008}.
Recently, Hierarchical Poisson Factorization of Gopalan et al.~\cite{Gopalan2015} deal with a different problem: item recommendation based on user ratings rather than user trajectories.
Chaney et al.~\cite{AllisonJ.B.Chaney2015} extends the Gopalan et al.\ model for recommendations when network side information exists but also does not consider trajectories.
The work of Wang et al.~\cite{Wang2012} uses a Latent Dirichlet Allocation-type embedding to capture latent topic transitions which we adapt to model trajectories in our evaluation (TM-LDA).
%

\myparagraph{Naive Methods}
Naive methods such as Gravity Model (GM)~\cite{Silva2006} are used to measure the average flow of users between items. Recently, Smith et al.~\cite{Smith2013} employs GMs to understand the flow of humans within a city. Galivanes et al.~\cite{Garcia-Gavilanes2014} employs GMs to understand Twitter user interactions. Note that GMs are application-dependent as they rely on a pre-defined distance function between items. 
Section~\ref{subsec:gravity} shows that \model is significantly more accurate than GM while retaining fast running times.

\myparagraph{Other Markov Chain Approaches}
A naive Markov Chain (MC) of the item-item transitions can be inferred via Maximum Likelihood Estimation (MLE)~\cite{Levin2008} but it does not provide enough flexibility to build good predictive models.
Fox et al.~\cite{Fox2010,Fox2014} and Matsubara et al.~\cite{Matsubara2014} propose Bayesian generalizations of Markov Switching Models~\cite{Frohwirth-Schnatter2008} (MSMs) for a different problem: to segment video and audio.
Liebman et al.~\cite{Liebman2015} uses reinforcement learning (RL) to predict song playlists but the computational complexity of the method is prohibitive even for our smallest datasets.

Hidden Markov Models (HMMs) can also be used to model trajectories. However, even in recent approaches~\cite{Wang2013,Toutanova2008,Goldwater2007}, fitting HMMs requires quadratic time in the number of latent spaces. The \model fitting algorithm is, in contrast, linear. HMMs are nevertheless interesting since inference is conditioned on the full sequence of a trajectory. \model can mimick this behavior with the $B$ parameter. We also considered novel HMM based model (called Stages) that has been proposed by Yang et al.~\cite{Yang2014}. Stages has sub-quadratic runtimes in the number of visited items (transitions) but the author-supplied source code did not converge to usable parameter values in our larger datasets. It is unclear why Stages is unable to converge over large datasets. In smaller datasets, where convergence did occur, Stages is less accurate than \model. The lower accuracy is likely because Stages is more focused on explicit trajectory commonalities and does not model personalized user transitions explicitly as \model does.

Table~\ref{tab:salesmat} compares \model with the strongest competitors in the literature for trajectory prediction and sense-making. \model is the only method that meets all criteria: general, personalized, multiple time scales, and scalable. Our inference algorithm is sub-quadratic in asymptotic runtime, as well as fully parallel. \model is the only approach that is accurate, general, and scalable.

\section{The \MakeUppercase{\model} Model}
\label{sec:model}
\model models each user as a random surfer over latent environments.
User trajectories are the outcome of a combination of latent user preferences and the latent environment that users are exposed to in their navigation.
We use a nonparametric model of short user trajectory sequences as steps of semi-Markov random walks over latent environments composed of random graphs and associated inter-event time distributions.
Inter-event times are defined as the time difference between two consecutive items visited by a user.
The model is learned via Bayesian inference.
Our model is illustrated in Figure~\ref{f:alice}; in our illustration user Alice jumps to a latent environment ($\cM=1$) according to her environment preferences and performs two random walk steps on the graph $G_\cM$ with inter-event times associated to environment $\cM$. 
After the two steps Alice randomly jumps to another latent environment of her preference.

Random walks on our latent environments are not to be confused with random walks on dynamic graphs. 
In the former the underlying graph topology and associated environment characteristics do not change once they are drawn from an unknown probability distribution while in the latter the graph structure is redrawn at each time step.
In our applications probabilistic static environments seem like a better framework: a user with a given latent intent in mind (listen to heavy metal, eat spicy Indian food) at a given location (a webpage, a neighborhood) has item preferences (edge weights to songs, restaurants) similar to other like-minded users in the same network location.
A random graph framework would be more accurate if webpages and restaurants randomly changed every time users made a choice.
Our probabilistic environments are different from random graphs as the environment distribution is unknown and defines other characteristics such as the inter-event time distribution (simpler unidimensional examples of Markovian random walks on random environments with known distributions are given in Alexander et al.~\cite{alexander1981excitation} and Hughes~\cite[Chapter~6]{Hughes1995}.).

By construction, \model's semi-Markov random walk generates ergodic, stationary, and time-homo\-geneous (E-S-Ho) trajectories.
\model models potentially non-ergodic, transient, and time-hetero\-geneous user trajectories (Ne-T-He) as short sequences of E-S-Ho trajectories.
By its nature E-S-Ho processes are generally easier to predict than Ne-T-He processes.
And while a single user may not spend much time performing E-S-Ho transitions, other users will use the same E-S-Ho process which should allow us to infer well its characteristics.

\subsection{Detailed \model Description} \label{s:details}
The set of users ($\cU$) in \model can be agents such as people, bots, and cars.
The set of items ($\Omega$) can be anything: products, places, services, or websites. 
The latent environment  $\cM = 1,2,\ldots$ is a latent weighted clique $G_\cM = (\Omega,E_\cM)$ over the set of items $\Omega$. 
Edge weights in $E_\cM$ have gamma distribution priors, $w_{(\cdot,v)} = w_v \sim \text{Gamma}(\beta,1)$, $\forall v \in \Omega$.
In what follows we define the operator $|\cdot|$ to be the size of a set and $\otimes$ denotes the outer-product.

Each user $u \in \cU$ generates a ``sequence trajectory'' of length $B+1$ at the $t$-th visit, $t \geq 1$, $$(x_{u,t},\ldots, x_{u,t+B}) \in  \Omega^{B+1} \, , B \geq 1 \, , $$
before jumping to another environment $\cM'$ according to user preference distribution $\pi_{\cM' | u}$.
The entire trajectory $x_{u,1}, x_{u,2},\ldots$ of user $u \in U$ is the concatenation of such sequences.

The time between observations $x_{u,t+k}$ and $x_{u,t+k+1}$ is the $k$-th inter-event time $\tau_{u,t+k}$, $k = 0,\ldots,B$.
Special care must be taken with the last event of a user, which does not have an inter-event time.
The random walk over $G_\cM$ is modeled as a semi-Markov process with inter-event time $\tau_{u,t} \sim \lambda(\cM)$  (a.k.a.\ holding or residence times).
Note again that inter-event times depend on the current latent environment.

\begin{figure}[t]
\centering
\includegraphics[width=2.4in,height=1.6in]{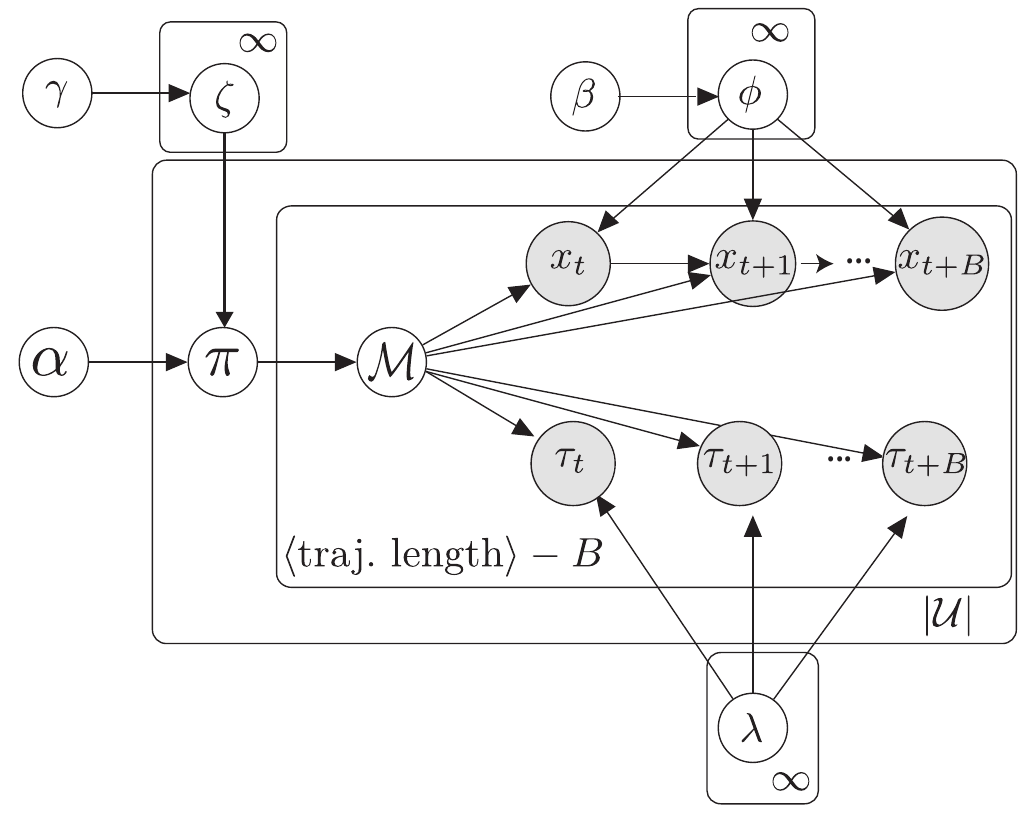}
\vspace{-5pt}
\caption{The \model model of the user and the semi-Markov transition probability matrix mixture.}
\label{fig:latentmodel}
\vspace{-3pt}
\end{figure}
%
We now define the transition probability matrix of our random walk over the random graph $G_\cM$.
\begin{theorem} \label{t:P}
The random walk over graph $G_\cM$ of environment $\cM$ is a semi-Markov chain with a random $|\Omega| \times |\Omega|$ transition 
probability matrix distributed as
\begin{align}\label{e:P}
{\bf P}_\cM \sim (I - \text{diag}(\phi_\cM))^{-1} (\phi_\cM \otimes \phi_\cM - \text{diag}(\phi_\cM^2)) \, ,
\end{align}
where $\text{diag}(\cdot)$ is a diagonal matrix and 
$\phi_{\cM} \sim Dirichlet(\cdot \mid \beta)$.
Semi-Markov chain $\cM$ is stationary, ergodic, and time-homogeneous with high probability if $|\Omega| > 2$.
\end{theorem}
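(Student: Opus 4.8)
The plan is to split the statement into its two halves: first, that the walk's one-step kernel has the stated random form; second, that the resulting semi-Markov chain is time-homogeneous, ergodic, and stationary once $|\Omega|>2$.

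\emph{The kernel.} A random walk on a weighted graph leaves $u$ for $v$ with probability proportional to the weight $w_{(u,v)}$. Because $G_\cM$ is a clique with no self-loops and $w_{(u,v)}=w_v$ depends only on the head of the edge, the walk moves from $u$ to any $v\neq u$ with probability $w_v/(W-w_u)$, where $W=\sum_{z\in\Omega}w_z$. Put $\phi_{\cM,v}:=w_v/W$. The Gamma--Dirichlet identity --- if $w_v \stackrel{\mathrm{iid}}{\sim}\mathrm{Gamma}(\beta,1)$ then $(w_v/W)_{v\in\Omega}\sim\mathrm{Dirichlet}(\beta,\dots,\beta)$, independently of $W$ --- gives $\phi_\cM\sim\mathrm{Dirichlet}(\cdot\mid\beta)$ and $[\mathbf{P}_\cM]_{uv}=\phi_{\cM,v}/(1-\phi_{\cM,u})$ for $u\neq v$, with zero diagonal. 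To recognize this in the displayed form, observe that $\phi_\cM\otimes\phi_\cM-\mathrm{diag}(\phi_\cM^2)$ is the off-diagonal part of the rank-one matrix $\phi_\cM\phi_\cM^{\top}$, so its $(u,v)$ entry is $\phi_{\cM,u}\phi_{\cM,v}\mathbf{1}[u\neq v]$; left-multiplying by the diagonal matrix $(I-\mathrm{diag}(\phi_\cM))^{-1}$ scales row $u$ by $(1-\phi_{\cM,u})^{-1}$, and the usual row-normalization (the bracketed matrix has row sums $\phi_{\cM,u}$) recovers exactly $\phi_{\cM,v}/(1-\phi_{\cM,u})$. Solving the stationarity equation for this kernel then shows it is reversible with stationary law $\pi_\cM(v)\propto\phi_{\cM,v}(1-\phi_{\cM,v})$, which I will invoke below.

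\emph{Time-homogeneity, ergodicity, stationarity.} Time-homogeneity is immediate: once $w$ (hence $\phi_\cM$) is drawn it is frozen, so every step uses the same kernel $\mathbf{P}_\cM$, and the holding times within an environment are i.i.d.\ from $\lambda(\cM)$. For ergodicity of the finite jump chain it suffices to establish irreducibility and aperiodicity. Since $\phi_\cM$ is Dirichlet with all concentration parameters $\beta>0$, almost surely every coordinate lies in $(0,1)$, so almost surely $[\mathbf{P}_\cM]_{uv}>0$ for all $u\neq v$; the transition digraph is then the loopless complete digraph, strongly connected whenever $|\Omega|\geq 2$, giving irreducibility. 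For aperiodicity, when $|\Omega|\geq 3$ every state $u$ admits a return of length $2$ ($u\to v\to u$) and a return of length $3$ ($u\to v\to v'\to u$ with $u,v,v'$ distinct), each of positive probability, and $\gcd(2,3)=1$, so the period is $1$. A finite irreducible aperiodic chain is ergodic with the unique stationary distribution $\pi_\cM$ above, so starting from $\pi_\cM$ makes the process stationary; ergodicity of the embedded jump chain lifts to the semi-Markov process as long as the $\lambda(\cM)$ holding times have finite mean.

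\emph{Where the work is.} The genuine use of the hypothesis $|\Omega|>2$ is exactly the aperiodicity step: for $|\Omega|=2$ the loopless clique forces the walk to be the deterministic two-cycle $u\leftrightarrow v$, which has period $2$ and is not ergodic, so nothing recovers the conclusion there. The only other delicate point is the phrase ``with high probability'': the jump-chain claims above are in fact almost sure, so that qualifier must be carrying either the finite-mean requirement on the holding-time distribution $\lambda(\cM)$ or a conservative hedge for the measure-zero boundary of the simplex, and I would make this dependence explicit rather than leave it implicit.
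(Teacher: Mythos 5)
Your proof is correct and follows essentially the same route as the paper's: derive the row kernel $\phi_{\cM,v}/(1-\phi_{\cM,u})$ from the Gamma--Dirichlet normalization identity, then obtain ergodicity from irreducibility, positive recurrence of a finite chain, and aperiodicity under $|\Omega|>2$ (the paper argues aperiodicity via non-bipartiteness of the clique, you via returns of lengths $2$ and $3$ with $\gcd(2,3)=1$ --- the same fact). You actually supply several details the paper elides, namely the ``little algebra'' matching the kernel to the displayed matrix form (including its row sums), the explicit reversible stationary law $\pi_\cM(v)\propto\phi_{\cM,v}(1-\phi_{\cM,v})$, and the observation that the jump-chain claims hold almost surely rather than merely with high probability.
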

\begin{proof}
Without loss of generality we assume that the walker starts at $o \in \Omega$. 
The semi-Markov random walk with transition probability matrix ${\bf P}_\cM$ over $G_\cM$, $\cM \geq 1$, sees edge weights $w_v \sim \text{Gamma}(\beta,1)$, $\forall v \in \Omega \backslash \{o\}$.
The probability that the walk moves to $v \neq o$ is $w_v / S_{\neq u}$, where $S_{\neq o} = \sum_{j  \in \Omega \backslash \{o\}} w_j$.
Let ${\bf P}_\cM(o,\Omega \backslash\{ o\})$ denote the off-diagonal elements of the random walk transition probability matrix ${\bf P}_\cM$.
Because $\{w_j\}_{j \in \Omega}$ are independent and Gamma distributed then $${\bf P}_\cM(o,\Omega \backslash\{ o\}) = (w_v / S_{\neq u})_{v \in \Omega \backslash \{o\}}$$ follows a Dirichlet distribution $\text{Dirichlet}(\beta,\ldots, \beta)$~\cite[pp.\ 91]{kingman1992poisson}.
Note that ${\bf P}_\cM(o,o) = 0$.
A little algebra gives Eq.~\eqref{e:P}.
The chain is trivially stationary and time-homogeneous as transition probabilities do not change over time.
We now show the chain is ergodic.
Any state $j \in \Omega$ is reachable from any state $i \in \Omega$ as $({\bf P}_\cM)^n(i,j) > 0$ for some $n \geq 1$ implying that the chain is recurrent~\cite[Theorem~5.1, pp.\ 66]{karlin1969stochastic}. As $|\Omega| < \infty$ the chain is positive recurrent.
By construction $G_1$ is connected and thus ${\bf P}_\cM$ is irreducible.
For $|\Omega| > 2$ the graph $G_1$ is not bipartite and making the chain aperiodic.
If a chain is irreducible, aperiodic, and positive recurrent, then it is ergodic~\cite[pp.\ 85]{karlin1969stochastic}.
\end{proof}
\vspace{-2pt}
Note that the random walk does not have {\bf revisits}, $x_t \neq x_{t+1}$, $\forall t > 0$.
In the datasets we remove all revisits because re-consumption (repeated accesses to the same item) tends to be easy to predict~\cite{Figueiredo2014}, highly application-specific, and can be {\em decoupled entirely} from our problem via stochastic complementation using phase-type Markov chains with a single entry state such as the ones in Neuts~\cite{Neuts1979}, Robert and Le~Boudec~\cite{Robert1996} and Kleinberg~\cite{Kleinberg2003}.

Gathering all elements together we obtain the model illustrated in Figure~\ref{fig:latentmodel}, which can be seen as a random surfer taking $B$ steps over a latent graph $G_\cM$ and then randomly moving to a new environment according to the following generative model:
\vspace{-.1cm}
\begin{enumerate*}
 \setlength{\itemsep}{1pt}
 \setlength{\parskip}{0pt}
          \item  Draw $\zeta \sim \text{GEM}(\gamma)$ according to a stick-breaking process.
       \item For each user $u \in \cU$ sample $\pi_{\cM|u} \sim \text{Dirichlet}(\cdot \mid \alpha \zeta)$.
        \item Draw a semi-Markov random walk transition probability matrix ${\bf P}_\cM \sim (I - \text{diag}(\phi_\cM))^{-1} (\phi_\cM \otimes \phi_\cM - \text{diag}(\phi_\cM^2)) \, ,$ where $\text{diag}(\cdot)$ is a diagonal matrix and  $\phi_{\cM} \sim \text{Dirichlet}(\cdot \mid \beta)$, $\cM = 1,2,\ldots$.
       \item For a given user $u$ each sequence burst $(x_t,\ldots,x_{t+B})_{u}$ with inter-event times $(\tau_{t},\ldots,\tau_{t+B-1})$ is generated as follows:
    \begin{enumerate*}
 \setlength{\itemsep}{1pt}
 \setlength{\parskip}{0pt}
 \vspace{-5pt}
        \item Draw a latent semi-Markov chain\\ $\cM \sim \text{Multinomial}(\pi_{\cM|u})$.
        \item For $k=0,\ldots,(B-1)$ select item $x_{t+k}$ according to probability ${\bf P}_\cM(x_{t+k},x_{t+k+1})$ and inter-event time $\tau_{t+k'} \sim \lambda(\cM)$, $k'=0,\ldots,B$, where $\lambda(\cM)$ is the inter-event time distribution of environment $\cM$. Item $x_1$ is drawn uniformly from $\Omega$.
    \end{enumerate*}
\end{enumerate*}\vspace{-.1cm}

\subsection{Inferring \model Model from Data}

In what follows we describe how we learn \model from data. 
Given a set of user trajectories $\{(x_{u,1},x_{u,2},\ldots) : \forall u \in \cU\}$ from a set of items $x_{u,t} \in \Omega$, $t \geq 1$, we infer:
\begin{itemize}[itemsep=0mm,leftmargin=*]
\item The number of environments $K > 1$ from the data.
\item $K$ semi-Markov transition probability matrices $\{{\bf P}_\cM : \cM = 1,\dots,K\}$ corresponding random walks over a {\em finite} set of graphs $\{G_\cM$ : $\cM=1,\dots,K\}$.
\item A distribution of user environment preferences $\{\pi_{\cM|u} : u \in \cU\}$.
\end{itemize}
If the inter-event time distribution $\lambda(\cM)$ comes of a known family we can also get a distribution of inter-event times for each environment. 
The probability that a user sees a sequence $x_{u,t},\ldots,x_{u,t+B}$ with inter-arrival times $\tau_{u,t},\ldots,\tau_{u,t+B}$ at environment $\cM$ is 
\begin{align*} 
    P[&x_{u,t},\ldots,x_{u,t+B},\tau_{u,t},\ldots,\tau_{u,t+B} | \cM] = \nonumber\\
     &\prod_{k=0}^{B-1} {\bf P}_\cM(x_{u,t+k},x_{u,t+k+1}) P[\tau_{u,t+k} | \cM] P[\tau_{u,t+B} | \cM] \, \nonumber,
\end{align*}
with ${\bf P}_\cM$ as given in Eq.~\eqref{e:P}.
The probability we observe such burst for user $u \in \cU$ is then
\begin{equation}
\label{eq:Bm}
\begin{aligned}
&P[x_{u,t},\ldots,x_{u,t+B},\tau_{u,t},\ldots,\tau_{u,t+B} | u] = \\ 
  &\sum_{K =1}^\infty P[ \zeta ] P[\tau_{u,t+B} | \cM]  \pi_{\cM|\alpha\zeta,u} \\
  &\qquad \times \prod_{k=0}^{B-1} {\bf P}_\cM(x_{u,t+k},x_{u,t+k+1}) P[\tau_{u,t+k} | \cM]  \,  \, ,
\end{aligned} 
\end{equation}
where $P[\zeta]$ is the stick-breaking prior over $\cM$.
Unrolling Eq.~\eqref{e:P} into Eq.~\eqref{eq:Bm} we obtain the equation that describes the trajectory:
\begin{equation}
\label{eq:full}
\begin{aligned} 
    P[&x_{u,t},\ldots,x_{u,t+B},\tau_{u,t},\ldots,\tau_{u,t+B-1} | u] \propto \\
      &\sum_{K =1}^\infty P[ \zeta ] P[\tau_{u,t+B} | \cM]\,  \pi_{\cM|\alpha\zeta,u} \\
       &\qquad \times \prod_{k=0}^{B-1} \frac{\phi_\cM(x_{u,t+k+1})}{1 - \phi_\cM(x_{u,t+k})} P[\tau_{u,t+k} | \cM]   \, .
\end{aligned}
\end{equation}

We use collapsed Gibbs sampling to estimate the model parameters. Initially, given a sequence of size $B + 1$, we transform user trajectories into a set $\cD$ of tuples using a sliding window over the trajectories of each user. 
To exemplify, for $B=2$ each entry is:
$$(u, x_{u,t},x_{u,t+1},x_{u,t+2},\tau_{u,t},\tau_{u,t+1}) \in \cD \:, t \geq 1 .$$
This tuple represents the user $u$, the trajectory $x_{u,t},\ldots,x_{u,t+B}$ and the inter-event times $\tau_{u,t},\ldots,\tau_{u,t+B}$ for every time $t \geq 1$. 
The use of sliding windows, while not theoretically justified by our model, tremendously simplify our inference problem by not forcing us to decide how 
to segment the data into blocks of $B$ events or forcing us to make $B$ a random variable.
Adding random jumps would entail a costly forward-backward inference procedure needed to decide when users jump between environments.
To infer \model, our heuristic starts with an initial estimate of the number of environments $K$ and randomly assign each tuple in $\mathcal{D}$ to one environment.

After the initial assignment we count the number of tuples of each user $u$: $n_u = \sum_{\forall (u', \ldots) \in \mathcal{D}} {\bf 1}(u' = u)$, where ${\bf 1}$ is the indicator function. 
We also count the number of times environment $\cM$ is assigned to a tuple from user $u$: $e_{\cM,u}$, as well as the joint count of items, at any position, and environments:  $c_{i,\cM}$, and count the number of tuples assigned to an environment  $\cM$:  $a_{\cM}$.
Assuming, for now, that $\zeta$ is given, we can infer~\cite{Griffiths02}
\begin{align} \label{eq:dirichlet}
    \pi_{\cM|\alpha\zeta,u} = \frac{e_{\cM,u} + \alpha \zeta(\cM)}{n_{u} + K\alpha\zeta(\cM)} \quad , \quad
    \phi_{\cM}(i) = \frac{c_{i,\cM} + \beta}{a_{\cM} + |\Omega|\beta}.
\end{align}
We then employ ECME~\cite{Gelman2013} inference where: (1) the e-step consists of one pass over the entire dataset performing a Gibbs sampling update (in other terms, one iteration of the collapsed Gibbs sampler); (2) an m-step where the algorithm considers the probability of inter-event times according to the following procedure. 

\begin{table*}[t!]
\centering
\caption{Summary of our datasets.}
\begin{tabular}{lccccccc}
\toprule
& \Hentity & \SDentity & \# Transitions & \# \Hentity & \# \SDentity & Inter-event Times & Timestamp Span \\
\cmidrule(r){2-8}
Last.FM-1k          & User & Artist     & 10,132,959 & 992       & 348,156   & Yes & Feb. 2005 to May 2009   \\
Last.FM-Groups      & User & Artist     & 86,798,741 & 15,235    & 1,672,735 & Yes & Feb. 2005 to Aug. 2014  \\
\cmidrule(r){2-8}
BrightKite          & User & Venue      & 2,034,085  & 37,357    & 1,514,460 & Yes & Apr. 2008 to Oct. 2010  \\
FourSQ          & User & Venue      & 453,429    & 191,061   & 87,345    & Yes & Dec. 2012 to April 2014 \\
\cmidrule(r){2-8}
YooChoose           & Session & Product & 19,721,515 & 6,756,575 & 96,094    & Yes & Apr. 2014 to Sept. 2014 \\
\cmidrule(r){2-8}
Yes         & Playlist & Song   & 1,542,372  & 11,139    & 75,262    & No & (No timestamps) \\
\bottomrule
\end{tabular}
\label{tab:data}
\end{table*}

If the inter-event times of environments $\cM=1,2,\ldots$ have a known probability law $\lambda(\cM)$ we can include this law in our model with the appropriate priors (e.g.\ a Normal distribution with fixed variance can have a Normal prior), updating the distribution parameters in each m-step. If the law of $\lambda(\cM)$ is unknown but inter-event times are observed, we use the empirical complementary cumulative distribution function (ECCDF) in the the following heuristic: We estimate the ECCDF of each $\cM$ based on the entries assigned to $\cM$ on the last e-step. 
If $T_\cM$ is the random variable that defines the inter-arrival at environment $\cM = 1,\ldots$.
For inter-event time $\tau_{u,t}$ the probability $P[T_\cM > \tau_{u,t}]$ given the current ECCDF is the number of entries whose observed inter-event times in $\cM$ are greater than $\tau_{u,t}$.
Thus, ${\bf 1}(T_\cM > \tau_{u,t})$ is a Bernoulli random variable~\cite{Wasserman2004} with parameter $p = P[T_\cM > \tau_{u,t}]$. 
Adding a conjugate prior $Beta(1, K-1)$ to the Bernoulli gives the following predictive posterior~\cite{Gelman2013}:
\begin{align} \label{eq:ptau}
    P[\tau_{u,t} | \cM] \propto \frac{b_{>\tau_{u,t},\cM}  + 1}{n_{\cM} + K} \, ,
\end{align}
where $b_{>\tau_{u,t},\cM} $ is the number of tuples currently assigned to $\cM$ that have inter-arrival times greater than $\tau_{u,t}$. It is easy to see from Eq.~\eqref{eq:ptau} that transitions with large inter-event times w.r.t.\ other inter-event times in $\cM$ are less likely to have been generated by $\cM$. This captures the intuition that inter-event times in an environment should be similar and is an integral part of how we learn $\zeta$ as detailed below. 
%
In our experiments the ECCDF heuristic consistently provides better results than assuming Normal inter-event time distributions. Testing other application-specific inter-event time posteriors is of interest in more application-specialized future work.

We now infer $K$, the number of environments, by adapting the partial expectation (partial-e) and partial maximization (partial-m) heuristics of Bryant and Sudderth~\cite{Bryant2012} for hierarchical Dirichlet processes (HDPs) as follows: (a) merge pairs of redundant environments when there is gain in the joint posterior probability of model \& data; and (b) split environments based on their inter-event times by separating the 5\% entries with highest inter-event time if there is a marginal gain in the posterior. The merge step (partial-e) is equivalent to the one described in~\cite{Bryant2012}. Our split step (partial-m) splits a environment with high variance inter-event times into two environments with lower variance if the splitting improves the joint posterior probability of model \& data.

{\bf Parallel Learning:} Finally, the model can be fully trained in parallel using the approach described in Asuncion et al.~\cite{Asuncion2011}. Each processor learns the model on a subset of the dataset. After a processor performs an ``E'' and an ``M'' steps, another processor is chosen at random to synchronize the model state (merge count matrices of the Gibbs sampler and update ECCDF estimates), leaving them with the same count matrices and ECCDF estimates. After a fixed number of iterations, which we fix as 200, every processor meets a barrier and the partial-e and partial-m steps (merges and splits) are performed on a single master processor. After these partial-e and partial-m steps, the master-processor updates all slave processors and the parallel learning continues as previously described. The learning ends after a fixed number of iterations, which we set as $2,\!000$ for the results in Section~\ref{sec:results}. 

{\bf Inference without Timestamps:} In a few of our datasets timestamps are not available. 
In these cases we only need to infer ${\bf P}_\cM$ and $\pi_{\cM|\alpha \zeta, u}$ without the need to take into account $P[\tau_{t+k} | \cM] $.
In our results we also take the opportunity to assume that the number of environments is fixed ($\zeta(\cM) = 1$) and infer the model posteriors using collapsed Gibbs sampling~\cite{Griffiths02,Matsubara2014,Harvey2011,Yin2015,Kang2013,Kang2013a,Wang2012} employing Eqs.~\eqref{eq:full} and~\eqref{eq:dirichlet}, updating the counts at each iteration (updating the posterior probabilities). We denote the latter simpler approach {\bf \model-NT} (\model-NoTimestamps).

{\bf Prior Parameters:} For all of our experiments, we fix prior parameters $\alpha = 50/K$ and $\beta = 0.001$. Constant $\gamma$ does not need to be stated explicitly. In the presence of large amounts of data these priors are expected to have little impact on the outcome of Bayesian nonparametric models~\cite{Gelman2013}.

\subsection{\model Predictions}

In this work \model has two prediction tasks: {\bf next-item} likelihood predictions and {\bf ranking}.
%
The personalized predictive likelihood of user $u \in \cU$ for candidate next-item $\tilde{x}_{u,t+1} \in \Omega$ is based on the last $B$ items and inter-event times of the user.
Observing that user $u$ has chosen items $x_{u,t-1},\ldots,x_{u,t-B}$ with inter-event times $\tau_{u,t-1},\ldots,\tau_{u,t-B}$ gives a posterior probability that $u$ is performing a random walk at latent environment $\cM \in \{1,\ldots,K\}$, where $K$ is the learned number of latent environments, typically $K < 10^3$ in our experiments.
More precisely, the posterior probability that $u\in \cU$ is in environment $\cM$ after choosing items $x_{u,t-1},$ $\ldots,x_{u,t-B}$ with inter-event times $\tau_{u,t-1},\ldots,\tau_{u,t-B}$ is $$P[\cM | u,  x_{u,t-1},\ldots,x_{u,t-B},\tau_{u,t-1},\ldots,\tau_{u,t-B}],$$ which yields the full likelihood
\begin{align} \label{eq:ll}
&P[\tilde{x}_{u,t+1} | u, x_{u,t-1},\ldots,x_{u,t-B},\tau_{u,t-1},\ldots,\tau_{u,t-B}] =   \\
     &  \left( \prod_{k=1}^{B} {\bf P}_\cM(x_{u,t-k},x_{u,t-k+1}) P[\tau_{u,t-k} | \cM]  \pi_{\cM|\alpha \zeta,u} \right) \nonumber \\
     &  \times \frac{\sum_{\cM = 1}^K {\bf P}_\cM(x_{u,t},\tilde{x}_{u,t+1})}{\sum_{\cM = 1}^K \prod_{h=1}^{B} {\bf P}_\cM(x_{u,t-h},x_{u,t-h+1}) P[\tau_{u,t-h} | \cM]  \pi_{\cM|\alpha \zeta,u}} \nonumber .
\end{align}

The task of ranking the next items is easier and faster than predicting their likelihood.
This is because we can speed up the predictions by not computing the denominator in eq.~\eqref{eq:ll}, as the denominator is the same for all values of $\tilde{x}_{u,t+1} \in \Omega$.
Note that our rankings are personalized and consider the inter-event times.

\section{Results}
\label{sec:results}
Previous sections introduce \model and explain how to infer its posteriors from data. 
We now turn our attention to compare \model against state-of-the-art approaches, solving the very same problems over some of the very same datasets (if publicly available) as their original papers. 
We also use larger publicly available dataset and one ever larger dataset that we collected for this study (available for download at our website).
In what follows Section~\ref{subsec:transitions} contrasts \model against state-of-the-art methods for next-item ranking. 
Section~\ref{subsec:markov} compares \model against methods that learn latent Markov chains and predict the likelihood of next items. 
Finally, Section~\ref{subsec:gravity} contrasts \model ability to predict average mean-field user flows against that of Gravity Model. 

Although \model is able to predict not just the next-item but also the next $n \geq 1$ items, our evaluations are based at next-item predictions because previous efforts mostly focused their evaluations on this task.
We also consider the reconsumption problem (consecutive visits to the same item) treated in Figueiredo et al.~\cite{Figueiredo2014} as a separate, often easier, problem that can be dealt with via stochastic complementation as discussed in Section~\ref{s:details}.


\subsection{Datasets and Evaluation Setup} \label{subsec:data}
Our datasets encompass three broad range of applications: (a) location-based social networks (check-in datasets), (b) music streaming applications, and (c) user clicks on e-commerce websites. 
It is important to point our that recommendation engines and user interfaces influence user navigation and their trajectories. Such effects are considered to be an integral part of our predictive task.
But  \model can as easily learn pure user preferences if given a dataset with no environment bias (when that is possible). 

Table~\ref{tab:data} summarizes our datasets showing the number of users and items, the total number of $x_{u,t}, x_{u,t+1}$ pairs visited all users (or transitions/trajectories when $B=1$) as well as the time span covered by the dataset. The set of items, $\Omega$, can be songs or artists on music datasets, venues on check-in data, and products on e-commerce data. The set of users $\cU$ are individuals, a ``playlist'', or a ``browser session'' as described next.
\vspace{-0.5em}
\begin{description}[leftmargin=0in,labelindent=0in]
 \setlength{\itemsep}{4pt}
 \setlength{\parskip}{0pt}
 \setlength{\leftmargin}{0pt}
    \item [Last.FM-Groups.] 
     Last.FM is a music streaming service that aggregates data from various forms of digital music consumption, ranging from desktop/mobile media players to other streaming services. This dataset was crawled in August 2014, using the user groups\footnote{Pages in which the user discusses musical artists} feature from Last.FM. We manually selected 15 groups of pop artists and two general interest groups\footnote{{\it Active Users}, {\it Music Statistics}, {\it Britney Spears}, {\it The Strokes}, {\it Arctic Monkeys}, {\it Miley Cyrus}, {\it LMFAO}, {\it Katy Perry}, {\it Jay-Z}, {\it Kanye West}, {\it Lana Del Rey}, {\it Snoop Dogg}, {\it Madonna}, {\it Rihanna}, {\it Taylor Swift}, {\it Adelle}, and {\it The Beatles}}.  For each group, we crawled the listening history of a subset of the users (the first users listed in the group).

\item[Last.FM-1k.] The second Last.FM dataset was collected in 2009 using snowball sampling by Celma et al.~\cite{Celma2010}. 

\item[BrightKite.] Brightkite is a location based social network (LBSN) where users share their current locations by check-ins. In this publicly available dataset, each \sdentity is a location where \hentity are checks-in. Collected by Cho et al.~\cite{Cho2011}. 

\item[FourSQ] Our second LBSN dataset was gathered from FourSquare by Sarwt {\it et al.}~\cite{Sarwat2014} in 2014. 

\item[YooChoose.] This dataset is comprised of user clicks on a large e-commerce business. Each \hentity is captured by a session and the trajectories capture clicks on different products within the session. 
As ``users'' are actually browser sessions, there is an upper-limit of 12 hours on recorded inter-event times of a single ``user'' (session).

\item[Yes.] Finally, the Yes dataset consists of song transitions (playlists) of popular broadcast (offline) radios in the United States. {\it This dataset does not provide explicit user information or timestamps}. However, we use \model-NT by defining the playlists as users and each song as an item. The Yes dataset was collected by Chen et al.~\cite{Chen2012,Chen2013} to develop LME.   
\end{description}
No filtering or trimming is done over the original data for the results shown in Figure~\ref{fig:cjewel}.
In our evaluation of trajectory predictions we divide the datasets into ``past'' ($\cD_\text{past}$) and ``future'' ($\cD_\text{future}$) by selecting a timestamp that splits the dataset into ``the first 70\% transitions'' for ``past'' (training set) and the remaining 30\% transitions in the data in ``future'' (test set), with the exception of Yes that has no timestamps.
This training and testing scenarios best represent real-life situations where the training is performed in batches over existing data.
For instance, in this realistic training and test setting we may need to predict the next transitions of a user that belongs to the test set but not to the training set. 
Note that some users will have trajectories confined in the ``past'' dataset while the trajectory of ``new users'' may be entirely placed in the ``future'' dataset.
For the Yes data the training and testing sets are the ones pre-defined by Chen et al.~\cite{Chen2012,Chen2013}.
Due to limitations in scalability of state-of-the-art methods we also exploit subsamples of our larger datasets when necessary. 
These subsamples have the first 1000, 2000, 5000, 10,000, 20,000 and 100,000 transitions ordered by timestamps. 
We test \model's robustness through these dataset subsamples.

{\bf Setup:} Our tests run on a server with 2$\times$10-core Intel Xeon-E5 processors and 256 GB of RAM.  
We tested and present results of \model with $B = 1,2,3,4,5$, fixing other hyper-parameters as discussed in Section~\ref{sec:model}.
Our results show that regardless of the choice of $B$, \model outperforms competitors. 


\subsection{\model for Next-item Ranking} \label{subsec:transitions}

Before we compare \model against competing approaches it is important to note {\bf only \model can handle our larger datasets} as evidenced by Figure~\ref{fig:cjewel}.  
All of our comparison against competing state-of-the-art methods are {\bf performed over small or sub sampled datasets due to the large execution times.}

We start our discussion on ranking methods by focusing on the evaluation metric: the mean reciprocal rank.
The reciprocal rank, $RR$, is the inverse of the position of the destination ${x}_{u,t+1}$ on the ranking of {\it all} potential candidates in decreasing order. That is, if  candidate, $\tilde{x}_{u,t+1}$, destinations {\it Big Brewery}, {\it Pizza Place}, and {\it Sandwich Shop} are ranked with probabilities $0.4$, $0.5$, and $0.1$ respectively, and the {\it true} destination was {\it Big Brewery}, the reciprocal rank has value $1/2$. Using \model, the RR can be computed using Eq.~\eqref{eq:ll} as follows (with $B=1$): 
\begin{align}RR({x}_{u,t+1}, x_{u,t}, \tau_t, u) = \frac{1}{rank(P[{x}_{u,t+1} \mid x_{u,t}, \tau_{u,t}, u])}.\end{align} 
    \noindent Based on the reciprocal rank we define a single metric measured over the entire test set. This metric is simply the mean of reciprocal rank values over every transition in $\mathcal{D}_{test}$ (MRR).


\myparagraph{\model with HDP heuristics} Before comparing \model with competing approaches, we test if our heuristic of HDP expansion and contraction of latent environments \& inter-event time ECCDF inference improves the quality of the predictions. That is, we run \model with and without the partial e- and m-steps described in Section~\ref{sec:model}. 
We train \model with an initial guess of $K=100$ environments and \model-NT (without partial e- and m-steps) with $100$ environments. 
We evaluate the MRR at our largest datasets: BrightKite, LastFM-Groups, LastFM-1k, and YooChoose, and one small dataset (FourSQ).
At FourSQ and YooChoose both \model and \model-NT have the same accuracy, possibly because of lower quality timestamps: FourSQ user data is sub-sampled {\em in time} and YooChoose browser sessions timeout at 12 hours. 
For the remaining datasets the MRR gains of \model over \model-NT are 7\%, 45\%, and 53\%, for LastFM-1k, LastFM-Groups, and BrightKite, respectively. 
This shows that our partial e- and m-step heuristics can significantly improve the results.
\begin{figure}[t]
    \centering
    \includegraphics{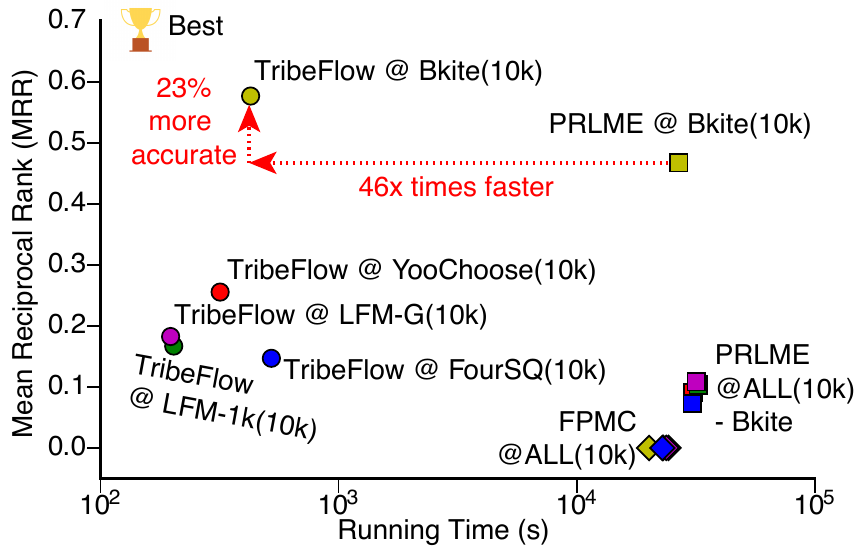}
    \vspace{-20pt}
    \caption{\model again outperforms in ranking task (subsampled datasets with only $\bf 10^4$ transitions).}
    \vspace{-5pt}
    \label{fig:mrr_comp}
\end{figure}


\myparagraph{\model v.s.\ state-of-the-art} We now turn to our comparison of \model against state-of-the-art competing ranking approaches. 
Our first competitor is  Factorizing Personalized Markov Chain (FPMC) of Rendle et al.~\cite{Rendle2010a}. 
FPMC was initially proposed to predict the next object a user will insert into an online shopping basket. 
If we consider each source $x_t$ as a size one shopping basket, FPMC can be used to rank the candidate items, or destinations $\tilde{x}_{t+1}$, that a user will consume next. 
Our second competitor is the best-performing Latent Markov Embedding (LME) method in our datasets: Personalized Ranking by Latent Markov Embedding (PRLME) of Feng et al.~\cite{Feng2015}. 
Inspired by Personalized Latent Markov Embedding~\cite{Wu2013} (PLME), the PRLME approach focuses on rankings and not on extracting Markov chains.

Figure~\ref{fig:mrr_comp} presents the results of \model against FPMC and PRLME over datasets Bkite, FourSQ, LFM-Groups, LFM-1k, YooChoose subsampled to the first {\bf 10,000} transitions only due to scalability issues of FPMC and PRLME (even when limiting inference with 1000 stochastic gradient descent iterations). 
The figure shows the MRR scores (y-axis) against running times in log-scale (x-axis). 
Each point in the figure represents one method-dataset pair, while different datasets are represented by different colors. We label each point in order to help readability. 
The top-left corner of the figure indicates the best accuracy and shorter runtime. 
In these tests, as with all our tests, we use 70\% of the initial user transitions to perform inference and the last 30\% to test accuracy.

In Figure~\ref{fig:mrr_comp} we see that {\em \model is 23\% more accurate and 46$\times$ times faster than the best result of PRLME}.
In all cases \model is more accurate and faster than PRLME and FPMC, oftentimes \model is two orders of magnitude faster.
This is surprising as \model runs parameter-free while PRLME and FPMC both optimize over a large parameter space to obtain their best results\footnote{We perform a grid-search over parameters, testing 3 to 5 different values for each parameter of each method.}.
Our evaluation also considers a range of subsampled transitions in the datasets, from $10^3$ to $10^5$ transitions.
The accuracy results are similar in all cases, with \model showing consistently more accurate results.
Interestingly, using the results from the LastFM-Groups subsamples ($1000, 5000, 10000, 20000,10^5)$ using a simple linear regression reveals that it would take {\it over six years} to run PRMLE and FPMC in the full LastFM-Groups dataset using our server. 
The minimum  expected running time of PRMLE/FPMC on a complete dataset is 14 days for the FourSQ dataset (the smallest dataset).
The parameter search, as well as the lack of parallelism, greatly impact the runtime of PRLME and FPMC. Nevertheless, both methods are sub-quadratic as is \model. In the next sub-section we shall compare \model with another fully parallelized baseline.

Finally, we point that we also compare \model with the Stages method proposed by Yang et al.~\cite{Yang2014}. Our simulations used an author-supplied source code\footnote{\url{http://infolab.stanford.edu/~crucis}} that, unfortunately, did not converge to usable parameter values except over a few of the sub-sampled datasets. In the cases where the model converged, best results of Stages over \model were observed in the YooChoose data (sub-sampled to 10k transitions) where Stage's MRR value is 0.15 while \model's MRR is 0.25. That is, in its best-performing dataset Stages is 40\% less accurate than \model. Moreover, Stages achieved MRR values of 0.05 for Brighkite (against 0.57 on \model) and of 0.10 on FourSQ (against 0.14 for \model).
%
%
%

The results discussed so far consider only average measures of the effectiveness of the rankings produced by the methods. Going a step further, we also performed a Kolmogorov-Smirnov test between the distributions of reciprocal rank values obtained by \model and the competing methods.
The results again clearly indicate that \model obtains larger RR values over all datasets ($p < 0.001$), which is consistent the MRR results.
Another aspect is the impact of the walker sequence length $B$ on \model performance.
We test \model with $B = 1,2,3,4,5$ finding that larger $B$ improves accuracy at FourSQ, LFM-1k, LFM-Groups and reduces accuracy at Bkite and YooChoose as shown in Figure~\ref{f:B}.
As the best choice of $B$ is application dependent, we recommend fixing $B=1$.

\begin{figure}[ttt]
    \centering
    \includegraphics{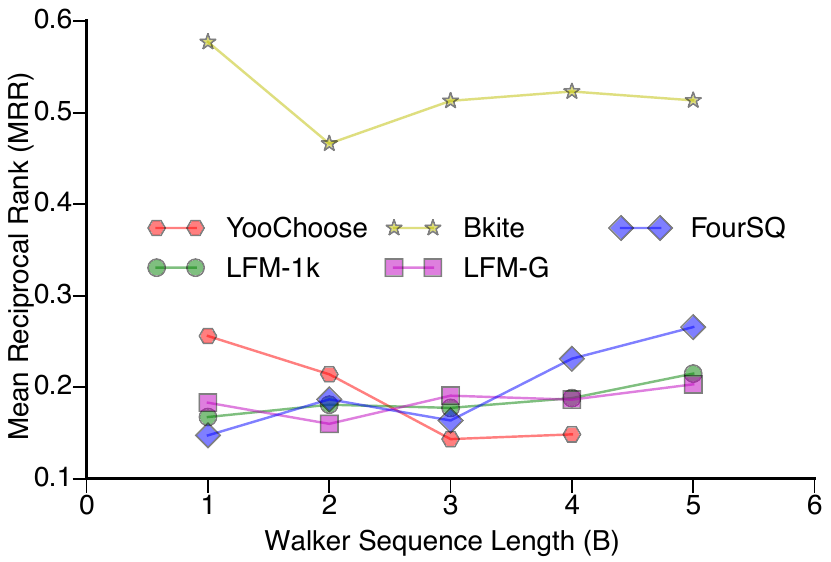}
    \vspace{-20pt}
    \caption{Impact of walker sequence length $B$.} 
    \vspace{-10pt}
    \label{f:B}
\end{figure}

\subsection{\model's Execution Time} \label{subsec:markov}
In this section we compare the performance of \model against Multi-LME~\cite{Chen2012,Chen2013} in our smallest datasets, namely Yes and FourSQ. {\bf These experiments validate the performance (in training time) of \model against another fully parallelized method.} 
For the sake of fairness, we also compare \model with MultiLME using the next-item likelihood measure: $P[x_{u,t+1} \mid x_{u,t}]$ as in Chen et al.~\cite{Chen2013} and other naive approaches (see below). 
The predictive likelihood using \model can be computed as:
%
\begin{align} \label{eq:ourll}
    P[x_{u,t+1} \mid x_{u,t}] = \sum_{\cM} \mathbf{P}_{\cM}(x_{u,t+1}) \mathbf{P}_{\cM}(x_{u,t}) P[\cM]
\end{align}
\noindent where $P[\cM] \propto$ the number of tuples assigned to $\cM$ in the inference.
Accuracy on the test set is measure using the log likelihood:
\begin{align} \label{eq:predll}
    \text{PredLL}(\cD_\text{future}) = \sum_{x_{u,t+1}, x_{u,t} \in \cD_\text{future}} log(P[x_{u,t+1} \mid x_{u,t}]).
\end{align}

\myparagraph{Naive Approaches}
To compute $P[x_{u,t+1} \mid x_{u,t}]$ one can also trivially adapt both  Latent Dirichlet Allocation (LDA)~\cite{Blei2003} and Transition Matrix LDA (TM-LDA)~\cite{Wang2012} for the same task.
In LDA each we define users as ``documents'', environments are ``topics'', and $x_{u,t}$ is the $t$-th word. 
TM-LDA follows the same definitions.
LDA is trained with $K$ latent factors and TM-LDA creates a $(K,K)$ matrix capturing the probability of transitioning between a latent ``topics''. 
Note that in these models $P[x_{u,t+1} \mid x_{u,t}] = P[x_{u,t+1}]$.
We refer to these adaptations as LDA' and TM-LDA' for simplicity as they do not reflect the original applications of LDA and TM-LDA.

\myparagraph{Inference Procedure} 
In our evaluations we use the fully parallelized Multi-LME implementation~\cite{Chen2013}\footnote{\small\url{http://www.cs.cornell.edu/People/tj/playlists/}}. The LDA' and TM-LDA' adaptations to our problem are trained using \texttt{scikit-learn}~\cite{scikit-learn} which includes a fast/online~\cite{Hoffman2010} and parallelized implementation of these methods. While faster LDA' training methods do exist~\cite{Yu2015}, we preferred to make use of a mature software package. 

The Yes dataset of Chen et al.~\cite{Chen2012,Chen2013} has no timestamps, thus we use \model-NT for this comparison instead of the more accurate general \model method. 
Each method was trained using different values of $K$, the number of \lowrank ($K \in \{10, 50, 25, 100\}$) on the Yes dataset and with $K=10$ on FourSQ because the original Multi-LME code has difficulties scaling to more factors on FourSQ. 


\begin{figure}[t]
    \centering
    \includegraphics[scale=.95]{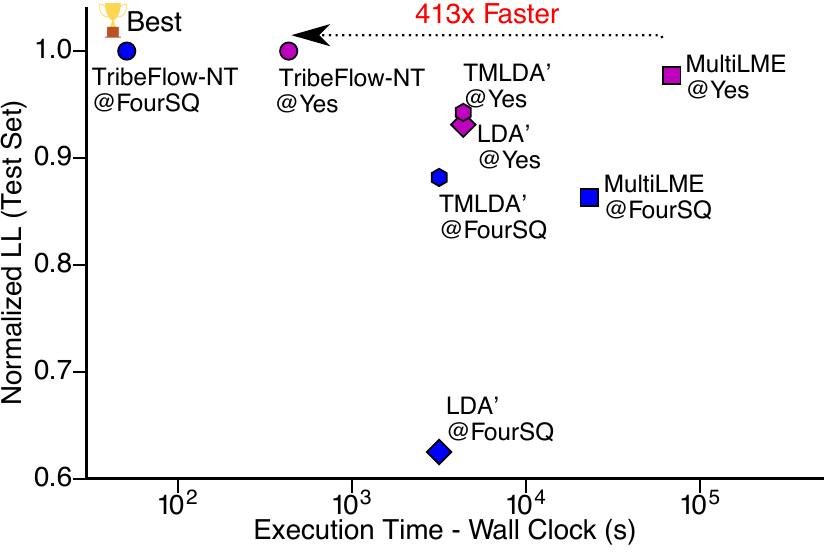}
    \vspace{-10pt}
    \caption{\model outperforms competition in next-item predictive log-likelihood task (small dataset due to competitors).}
    \label{fig:lme_comp}
    \vspace{-10pt}
\end{figure}

\begin{figure*}[t]
    \centering
    \subfigure[``Classical Music'']{\frame{\includegraphics[scale=.24]{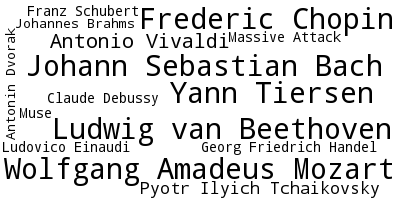}}}
    \subfigure[``Movie Composers'']{\frame{\includegraphics[scale=.24]{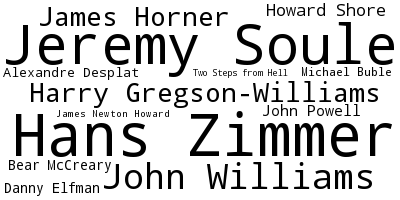}}}
    \subfigure[``Rock'']{\frame{\includegraphics[scale=.24]{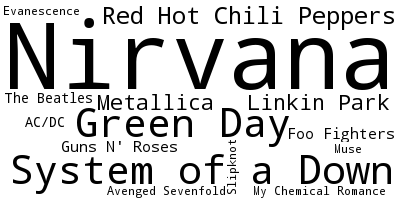}}}
    \subfigure[``Heavy Metal'']{\frame{\includegraphics[scale=.24]{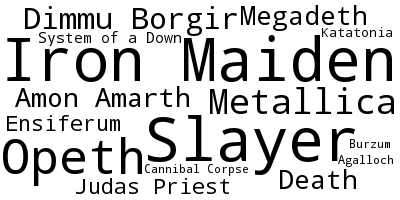}}}
    \subfigure[``Electro House'']{\frame{\includegraphics[scale=.24]{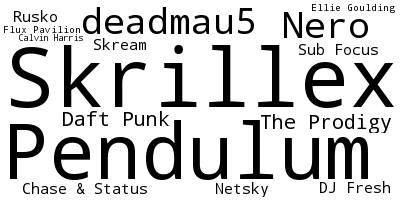}}}
    \vspace{-10pt}
    \caption{Examples of popular items in latent \model  environments in a dataset dominated by pop music fans (LFM-G).}
    \label{fig:lastfmtrans}
\end{figure*}

\myparagraph{Results} 
Figure~\ref{fig:lme_comp} presents our results in the next-item predictive log-likelihood task using $K=100$ and  $K=10$ on the Yes and FourSQ datasets. 
We get similar results for Yes with $K \in \{10, 25, 50\}$. 
For the sake of interpretability we normalize the predictive log likelihood of each method by the best result (always \model).
Each point in the figure represents one method-dataset pair, while different datasets are represented by different colors. We labelled each point in order to help the interpretation of the results. With these settings, it is expected that the method that performs the best embedding is placed in the top-left corner of the figure (higher accuracy, lower runtime). 
The x-axis  represents the execution time, whereas the y-axis represents the normalized likelihood. 

In Figure~\ref{fig:lme_comp} we clearly see that \model-NT is the best approach on both datasets. 
Compared to MultiLME \model-NT achieves a higher log-likelihood at a fraction of the runtime (speedups are up to 413$\times$). 
As expected, \model-NT and MultiLME usually outperforms the LDA-based baselines since \model-NT and MultiLME were built to explicitly capture user trajectories. 
Interestingly, comparing the speed of \model and  \model-NT in Figures~\ref{fig:cjewel} and~\ref{fig:lme_comp} we see that for FourSQ \model-NT is one order of magnitude faster than \model but less accurate than \model. This occurs because \model-NT does not infer inter-event time distributions using the ECCDF heuristic.
\subsection{Flows of Users Between Locations} \label{subsec:gravity}
\model outperforms the state-of-the-art methods in sophisticated tasks such as ranking and predictive next-item likelihoods. 
But what about a simpler task? Can \model outperform simple application-specific methods?
In this section we compare \model against the Gravity Model (GM) for uncovering the average flows of users between two locations.
The widely popular Gravity Model (GM) uses GPS coordinates and requires a pre-defined distance function, $dist$, capturing the proximity of locations around the globe. 
As in Smith et al.~\cite{Smith2013} and Garcia-Gavilanes et al.~\cite{Garcia-Gavilanes2014} we employ the distance on a sphere from the latitude and longitude coordinates of the venues in our LBSN datasets.
The three parameters, $\theta_1$, $\theta_2$ and $\theta_3$ of the distance are fitted using a Poisson regression that is known to lead to better results~\cite{Silva2006}.

Our goal is to estimate $f_{ds}$, the flow of users going from location $s \in \Omega$ to location $d \in \Omega$.
Let $n_d = \sum_{x_{u,t+1}, x_{u,t} \in \cD_{present}} {\bf 1}(x_{u,t+1} = d)$  denote the number of visits of all users to a destination $d$ and $r_s$ be the equivalent number of visits of all users to a source location $s$.
GM captures the flows of users between the two locations $f_{ds}$ as
$$\hat{f}^{(\text{GM})}_{ds} = \frac{r_s^{\theta_1} n_d^{\theta_2}}{dist(d, s)^{\theta_3}}.$$

\model can trivially estimate the flow $\hat{f}^{(\text{TF})}_{ds}$ from Eq.~\eqref{eq:ourll}.
Since gravity models are limited to geolocated datasets (need GPS coordinates), we compare \model with GM on  our FourSQ and Brightkite. 
Note that unlike GM, \model is application-agnostic and {\bf does not} use GPS coordinates, albeit it would be straightforward to incorporate such application-specific features in the latent environments.
Further, we opt to use \model-NT (instead of the more accurate \model method) because of its slightly faster running time, trading-off accuracy for speed. 
We infer the posteriors from \model-NT with $K=10$ environments. 
In this setting, training both models takes less than 5 minutes.  
Methods are evaluated using the mean absolute error (MAE).

\model-NT significantly outperforms GM for geolocation flows with just a few ($K=10$) random environments and, unlike GM, without GPS coordinates. 
Specifically, GM achieves MAE results of 10.48 and 9.81 on the Bkite and FourSQ datasets, respectively. 
In contrast,  \model-NT achieves {\bf 1.606} and {\bf 1.41} on Bkite and FourSQ, respectively.  
{\it The improvements of \model range from roughly 800\% to 900\%} in mean absolute error. 
Again, validating our results using the Kolmogorv-Smirnov test showed that \model-NT is statistically more accurate than GM ($p < 0.001$). 


\section{Exploratory Analysis}
\label{sec:exploratory}

\begin{figure}[t]
    \centering
    \includegraphics[scale=.42]{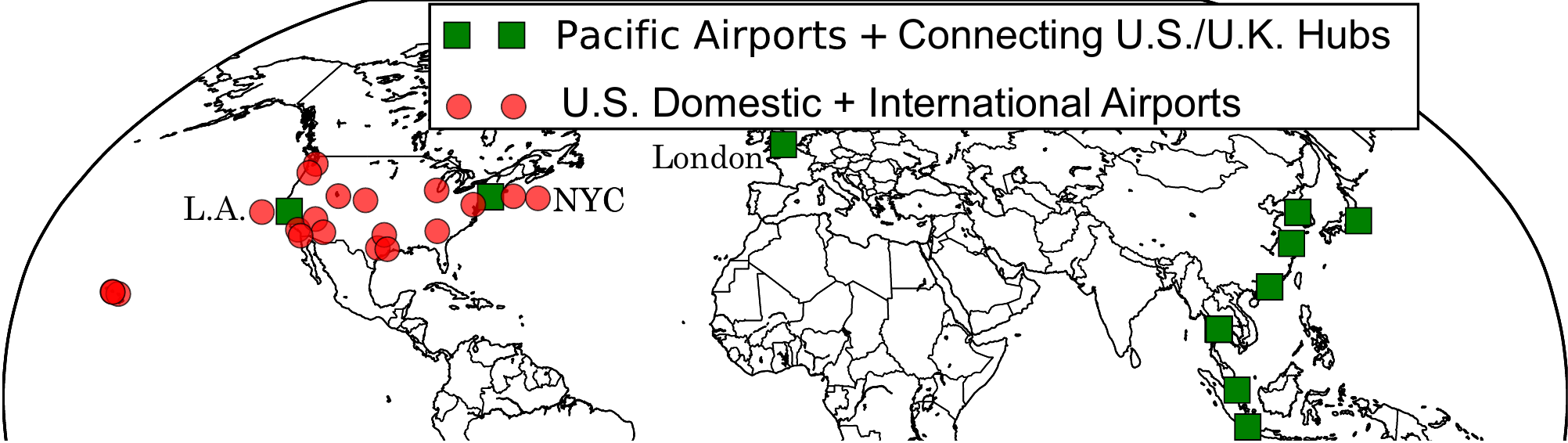} \\
    \caption{Latent flow environments inferred by \model without GPS information.}
    \label{fig:geo}
\end{figure}

\begin{figure}[t]
\centering
\subfigure[User 1]{\includegraphics[scale=.75]{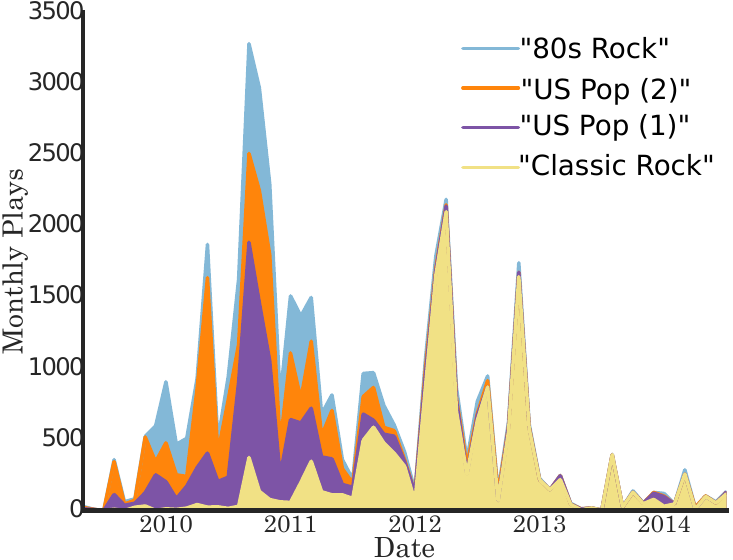}}
\subfigure[User 2]{\includegraphics[scale=.75]{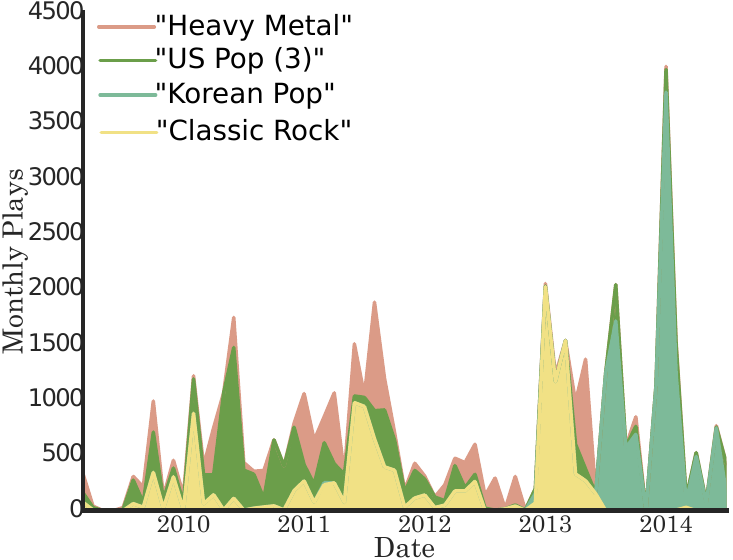}}
\vspace{-10pt}
\caption{Example of Two Users From Last.FM-Groups}\label{f:ua}
\vspace{-10pt}
\end{figure}

In this section we consider how \model can be used for sense-making in our datasets. 
Section~\ref{subsec:artists} discusses the semantics of \lowrank inferred by \model in our LastFM-Groups dataset, specially in the presence of non-stationary, transient, and time-heterogeneous user trajectories. 
Section~\ref{subsec:geosemantics} introduces latent environments inferred in the FourSQ dataset without GPS data. 
Finally, Section~\ref{subsec:temptensors} discusses how \model compares with tensor decomposition approaches when uncovering meaningful patterns of user behavior from stationary user trajectory data.

\subsection{Artist-to-Artist Transitions} \label{subsec:artists}
Figure~\ref{fig:lastfmtrans} shows five latent environments inferred by \model from the LastFM-Groups dataset. 
Each latent environment is represented as a word-cloud of the names of the top 15 artists in the environment ranked by the random walk steady-state probability at each environment. 
We cross-reference the top artists at each latent environment in Figures~\ref{fig:lastfmtrans}(a-e) with the AllMusic guide\footnote{\url{http://www.allmusic.com/}}, finding that the environments discovered are semantically meaningful. 
It is worth noting that users in the LastFM-Groups dataset are overwhelmingly declared fans of pop music (as described in Section~\ref{subsec:data}), but even then, \model is able to extract the user interests of very diverse musical themes such as those depicted in Figure~\ref{fig:lastfmtrans}.

Taking a more in-depth look at the latent environment represented in Figure~\ref{fig:lastfmtrans}(a) shows a sequential user trajectory preference for songs related to ``Classical Music''. 
Some of the main artists/composers in this environment are: Chopin, Bach, Beethoven and Mozart. 
The latent environment in Figure~\ref{fig:lastfmtrans}(b) shows composers of motion picture sound tracks. 
John Williams, for instance, compose soundtracks for popular movies such as Star Wars, Jaws, ET and Superman. 
The environment in Figure~\ref{fig:lastfmtrans}(c) represents popular rock bands such as Nirvana and Green Day, whereas the environment in Figure~\ref{fig:lastfmtrans}(d) represents heavy metal bands such as Iron Maiden and Slayer. 
Finally, the environment in Figure~\ref{fig:lastfmtrans}(e) represents {\em electro house} music being composed of groups such as Skrillex, Pendulum, deadmau5, and Nero.



\subsection{Flow Semantics in Check-in Data} \label{subsec:geosemantics}

We now turn our attention to the Foursquare dataset. Environments in this dataset capture user trajectories between businesses. 
We want see whether \model infers semantically meaningful latent environments for check-in trajectory data.
Interestingly, because \model~{\bf does not use GPS} features, \model can identify latent environments of ``nearby'' locations in any geometry.

Figure~\ref{fig:geo} shows the top-20 locations for two different latent environments inferred by \model. 
Each point in the figure is a latitude-longitude coordinate plotted on the world map from its GPS coordinate. 
Two latent environments best exemplify the sense-making abilities of \model. 
The first latent environment is represented by U.S.\ airports and seems to capture flows of U.S.\ domestic flights (including Hawaii). 
The second latent environment captures check-in trajectories of connections to/from Pacific-Asia-based airports. Also in this environment are major U.S./U.K.\ major hubs (JFK in New York City (NYC), LAX in Los Angeles and Heathrow in London) that connect to Pacific-Asia. 
Although omitted from the figure, \model also extracted environments based on user trajectories within cities including NYC, Miami and Atlanta. In these settings, check-ins are related to different places in these cities. 
This illustrates that \model latent environments can be a powerful tool for sense-making in user trajectory datasets.

\subsection{Transient User Trajectories} \label{subsec:temptensors}
In this section we illustrate how \model can help us identify transient user trajectories (also showing that \model can cope well with the transience). 
To illustrate this, Figures~\ref{f:ua}(a,b) present the number of song plays (y-axis) over multiple years (x-axis) of two users from Last.FM-Groups broken down into the user's four preferred latent environments. 
More precisely, the y-axis shows the cumulative sum of the song plays at a given month color-coded by the song's most likely latent environment for that user, $P[\cM | u, x_{t}]$. 

As shown in the Figure~\ref{f:ua}(a),  from 2010 until mid 2011 the user goes through a strong Pop phase -- most representative (top)  artists in the environment labeled ``U.S.\ Pop (1)'' are  {\it Madonna, Nelly Furtado,} and {\it Alicia Keys},  and top artists in the  ``U.S.\ Pop (2)'' environment are  {\it Britney Spears, Leona Lewis, } and {\it Kelly Clarkson}. 
We also note some interest in 70-80's Rock overtones -- top artists being  {\it Queen, Michael Jackson, } and {\it The Beatles}.
After mid 2011 the user moves away from Pop artists towards a ``Classic Rock'' environment, with  {\it The Beatles, Pink Floyd,} and {\it  Nirvana} as top artists. 

The user represented in Figure~\ref{f:ua}(b) also changes interest over time, most markably from ``Classic Rock'' \& ``Heavy Metal'' to ``Korean Pop''. 
The three major take-aways are: (a) user trajectories are indeed transient (Figure~\ref{f:ua}); (b) users can show interests in the same latent environment at different points in time: User 1 Figure~\ref{f:ua}(a) shows strong ``Classic Rock'' environment preference between 2011-2012 while User 2 Figure~\ref{f:ua}(b) shows strong preference for the same environment between 2013-2014; and (c) \model can cope well with transient trajectories.


To provide further evidence in support \model's ability to extract stationary, ergodic, and time-homogeneous behavior from user trajectories consider the following synthetic dataset.
For comparison, we contrast \model with a state-of-the-art temporal tensor approach in the same scenario.
This comparison sheds light into the reasons why tensor-factorization-based methods should have difficulty in extracting stationary, ergodic, and time-homogeneous behavior from user trajectory data.

Our synthetic dataset has 50 users event assigned to one of five Markov chains.
The Markov chains are used to create user trajectories. 
We model the popularity of items at each chain as a Lognormal distribution.
Also, each environment has exponentially distributed inter-event times but timestamps are not recorded. 
We simulate a total of 5 days where and each user selects in average 100 items per day.  
We simulate users joining the system 1 or 2 days apart.

%

%
%
 
\begin{figure}[t]
    \centering
    \includegraphics[scale=.9]{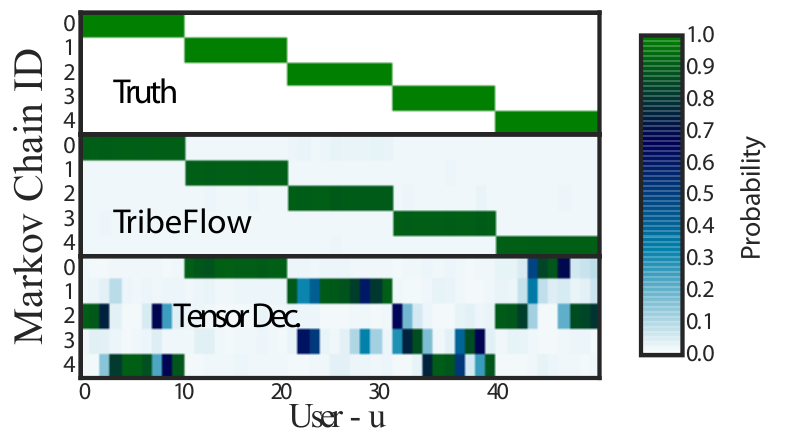}
    \vspace{-10pt}
    \caption{Comparison with Temporal Decomposition.}
    \vspace{-5pt}
    \label{fig:trimine}
\end{figure}

Applying \model-NT to this synthetic data almost perfectly recovers user preferences of transition matrices as presented in Figure~\ref{fig:trimine}. \model should work even better if timestamps were available, but that would be an unfair comparison because of the extra feature.
A state-of-the-art temporal decomposition method~\cite{Matsubara2012a} (tensor with time mode) on the same data has trouble finding the ground truth (i.e., user ids  $0-9$ using Markov chain 0, user ids $10-19$ using Markov chain 1, etc). 
Note that only \model is able to recover the different user trajectory preferences despite the asynchronous behavior of users.
Clearly this type of tensor decomposition is meant to uncover only synchronized behavior as originally proposed by Matsubara et al.~\cite{Matsubara2012a}.

\section{Conclusions}
\label{sec:conclusions}
In this work we introduced \model, a general method to mine and predict user trajectories. 
\model decomposes non-stationary, transient,  time-heterogeneous user trajectories into a small number of short random walks on latent random environments.
The decomposed trajectories are stationary, erratic, and time-homogeneous in short time scales. 
User activity (e.g., listening to music, shopping for products online or checking-in different places in a city) is then captured by different latent environments inferred solely from observed user navigation patterns (e.g., listening to ``classical music'', listening to ``Brazilian Pop'', shopping for shoes, shopping for electronics, checking-in into airport fast-food venues, etc.).
We summarize our major contributions as follows:
\begin{itemize}[itemsep=-0.1mm,leftmargin=*]
    \item {\bf Accurate:} \model outperforms various state-of-the-art baseline methods in three different tasks: extracting Markov embedding of user behavior, next-item prediction and capturing the flows of users between locations. Gains are up-to 900\% depending on the dataset and task analyzed. 
\item {\bf Scalable: } \model is at least tens and up to hundreds of times faster than state-of-the-art competitors even in relatively small datasets. If we consider the only other fully parallel competitor, MultiLME~\cite{Chen2013}, \model is 413x faster and still more accurate.
\item {\bf Novelty:} \model provides a general framework to build upon (random surfer over infinite latent random environments) and make application-specific personalized recommendation systems.
\end{itemize}

\section*{Acknowledgments} 
Research was funded by Brazil's National Institute of Science and Technology for Web Research (MCT/CNPq/INCT Web 573871/2008-6). Research was also sponsored by the Defense Threat Reduction Agency and was accomplished under contract No. HDTRA1-10-1-0120, as well as by the Army Research Laboratory and was accomplished under Cooperative Agreement Number W911NF-09-2-0053. The views and conclusions contained in this document are those of the authors and should not be interpreted as representing the official policies, either expressed or implied, of the Army Research Laboratory or the U.S. Government. The U.S. Government is authorized to reproduce and distribute reprints for Government purposes notwithstanding any copyright notation here on.

\balance
\bibliographystyle{abbrv}
\bibliography{BIB/myref.bib}

\end{document}